\let\doendproof\endproof
\renewcommand\endproof{~\hfill$\qed$\doendproof}
\begin{document}
\title{The well-separated pair decomposition for  balls} 
\author{ Abolfazl Poureidi  \and Mohammad Farshi}
\institute{Combinatorial and Geometric Algorithms Lab., \\ Department of Computer Science, Yazd University,  Yazd, Iran. \\ \email{a.poureidi@gmail.com, mfarshi@yazd.ac.ir}.}
\maketitle
\begin{abstract}
Given a real number $t>1$, a geometric \mbox{$t$-spanner} is a geometric graph for a point set in $\mathbb{R}^d$ with straight lines between vertices such that the ratio of the shortest-path distance between every pair of vertices in the graph (with Euclidean edge lengths) to their actual Euclidean distance is at most~$t$.
An imprecise point set is modeled by a set $R$ of regions in $\mathbb{R}^d$. If one chooses a point in each region of $R$, then the resulting point set is called a precise instance of~$R$.
An imprecise \mbox{$t$-spanner} for an imprecise point set $R$ is a graph $G=(R,E)$ such that for each precise instance $S$ of $R$, graph $G_S=(S,E_S)$, where $E_S$ is the set of edges corresponding to $E$, is a \mbox{$t$-spanner}.

In this paper, we show that, given a real number $t>1$, there is an imprecise point set $R$ of $n$ straight-line segments in the plane such that any imprecise \mbox{$t$-spanner} for $R$ has $\Omega(n^2)$ edges. Then, we propose an algorithm that computes a Well-Separated Pair Decomposition (WSPD) of size ${\cal O}(n)$  for a set of $n$ pairwise disjoint  $d$-dimensional balls with arbitrary sizes. Given a real number $t>1$ and given a set of $n$ pairwise disjoint $d$-balls with arbitrary sizes,  we use this WSPD  to compute in ${\cal O}(n\log n+n/(t-1)^d)$ time   an imprecise \mbox{$t$-spanner} with ${\cal O}(n/(t-1)^d)$ edges  for balls.
\\[5mm]
{\bf Keywords:} Geometric spanner,   The well-separated pair decomposition, Imprecise data, Geometric algorithm
 \end{abstract}
 \section{Introduction}\label{sec.intro}
 We use a geometric algorithm to solve a geometric problem. The input of geometric problems is some spatial objects, for example, a set of points in the plane. In many problems of computational geometry, there exists an assumption that the input data is precise and known exactly. However, there are many aspects of uncertainty in data, such as input data have been collected using measuring equipment that is not precise enough, or may have been stored as floating point with a limited number of decimals. There are many work in computational geometry that consider geometric algorithms for imprecise inputs \cite{blmm-pipfdtsae-11,hm-ticpps-08,klm-pipast-10,lm-uiopipffold-14,lk-laschfip-10,lk-lbbsdarpoip-10}. In these work, each point is modeled by a region in~$\mathbb{R}^d$, and then for these regions constructing a geometric structure such as the convex hull, the Voronoi diagram, or the (Delaunay) triangulation is considered.
 
 A geometric network is a weighted undirected graph whose vertices are points in~$\mathbb{R}^d$, and in which each edge is a straight-line segment with weight equal to the Euclidean distance between its endpoints. In a geometric network $G = (P,E)$ on a set $P$ of $n$ points, the graph distance $d_G(u, v)$ of $u,v\in P$ is the length of the shortest path between $u$ and $v$ in~$G$. Then, $\delta_G(u,v)=\frac{d_G(u,v)}{|uv|}$ denotes the dilation between $u$ and $v$ in $G$. We say that there exists a \mbox{$t$-path} ($t>1$) between two vertices $u,v\in P$ in $G$ if $\delta_G(u,v)\leq t$ and a network $G$ is called a \mbox{$t$-spanner} if~$\delta_G(u,v)\leq t$ for any pair of distinct points $u,v\in P$. 
  
 We call any set $R=\{R_1,\ldots,R_n\}$ of $n$ regions in $\mathbb{R}^d$ an imprecise point set. For a given imprecise point set $R$, any set $S=\{p_1,\ldots,p_n\}$, where $p_i\in R_i$, for all $1\leq i\leq n$, is called a precise instance of~$R$. For a given imprecise point set $R$, a graph $G = (R,E)$, where $E$ is a set of unordered pairs of regions in $R$, is called an imprecise geometric graph. 
 
 Given an imprecise geometric graph $G = (R,E)$, and for each precise instance $S$ of $R$, we call the geometric graph $G_S = (S,E_S)$, where $E_S=\{(p_i,p_j)|(R_i,R_j)\linebreak \in{E}\}$, a precise instance of $G$ corresponding to~$S$. Also, we call $G$ an imprecise \mbox{$t$-spanner} ($t>1$), if $G_S$, for any precise instance $S$ of $R$, is a \mbox{$t$-spanner}. It is easy to see that if there are two overlapping regions in $R$, then there must be an edge between the overlapping regions in any $t$-spanner for $R$. Therefore, the number of edges of a $t$-spanner for $R$ depends on the number of overlapping regions. Hence, in the rest of the paper, we assume that $R$ contains only pairwise disjoint regions.
  
 Abam et al.~\cite{acfs-otpotsspd-13} considered the problem of constructing a spanner for $n$ pairwise disjoint balls in~$\mathbb{R}^d$. For a given $t>1$, they showed that there exists an imprecise \mbox{$t$-spanner} with ${\cal O}(n/(t-1)^d)$ edges that can be computed in ${\cal O}(n \log n+n/(t-1)^d)$ time when all balls have similar sizes. Their spanner construction was based on the Well-Separated Pair Decomposition (WSPD)~\cite{callahan1995decomposition} approach, see also Chapter 9 of the book by Narasimhan and Smid~\cite{ns-gsn-07b}. They obtained a WSPD of imprecise points, i.e., balls, using a WSPD of the center points. A Well-Separated Pair Decomposition (WSPD) for a point set $S\subset\mathbb{R}^d$ with respect to a real number $s>0$ is a set $\{\{A_i,B_i\}_i\}$ of pairs where (i) $A_i,B_i\subset S$, (ii) $A_i$ and $B_i$ are $s$-well-separated, i.e., there are $d$-dimensional balls $D_{A_i}$ and $D_{B_i}$ containing $A_i$ and $B_i$, respectively, such that $d(D_{A_i},D_{B_i})\geq s\times \max(\mbox{radius}(D_{A_i}),\mbox{radius}(D_{B_i}))$, and (iii) for any two points $p,q\in S$ there is exactly one index $i$ such that $p\in A_i$ and $q\in B_i$ or vice versa. When the sizes of the balls vary greatly, i.e. there is a set of $n$ pairwise disjoint balls in~$\mathbb{R}^d$ with arbitrary sizes, they used a Semi-Separated Pair Decomposition (SSPD) \cite{abam2009region,varadarajan1998divide} to solve the problem. They proved that there is an imprecise \mbox{$t$-spanner} with ${\cal O}(n \log n/(t-1)^{2d})$ edges that can be computed in ${\cal O}(n \log n/(t-1)^{2d})$ time. They constructed an SSPD of imprecise points using an SSPD of the center points. An SSPD is defined as a WSPD, except that, instead of $A_i$ and $B_i$ are $s$-well-separated in the condition (ii) we have $A_i$ and $B_i$ are $s$-semi-separated, i.e., there are balls $D_{A_i}$ and $D_{B_i}$ containing $A_i$ and $B_i$, respectively, such that $d(D_{A_i},D_{B_i})\geq s\times \min(\mbox{radius}(D_{A_i}),\mbox{radius}(D_{B_i}))$.
 
  Zeng and Gaoy \cite{zeng2014linear} considered the construction of a Euclidean spanner for $n$ balls in~$\mathbb{R}^d$ with radius $r$ in two phases. In the first phase, they preprocessed balls in time ${\cal O}(n(r+1/\varepsilon)^d\log\alpha)$, where $\alpha$ is the ratio between the farthest and the closest pair of centers of the balls. In the second phase, they could compute (or update) a $(1 +\varepsilon)$-spanner for any precise instance of the balls with ${\cal O}(n(r+1/\varepsilon)^d)$ edges in time ${\cal O}(n(r + 1/\varepsilon)^d\log(r+1/\varepsilon))$.
 
 In this paper, we consider the problem of computing an imprecise $t$-spanner for $n$ pairwise disjoint balls in~$\mathbb{R}^d$, given a real number $t>1$. These balls have arbitrary sizes. We present an algorithm that computes an imprecise $t$-spanner with ${\cal O}(n)$ edges in ${\cal O}(n\log n)$ time, when $t$ and $d$ are constants. The algorithm uses the WSPD to compute this imprecise spanner. Also, we give a set of pairwise disjoint regions in the plane such that any imprecise $t$-spanner for the regions is the complete graph.
 
 The organization of the paper is as follows. In Section~\ref{sec:spanner-n2-edge}, we prove that there is a set of $n$ pairwise disjoint straight-line segments in the plane such that any imprecise \mbox{$t$-spanner} for the segments has $\Omega(n^2)$ edges. Then, given pairwise disjoint balls in~$\mathbb{R}^d$ with arbitrary sizes, and given a real number $t>1$, we consider the problem of computing an imprecise $t$-spanner for the balls. In Section~\ref{sec:wspd-different-disk2}, we use the WSPD to compute an imprecise $t$-spanner for the balls with ${\cal O}(n/(t-1)^d)$ edges in ${\cal O}(n\log n+n/(t-1)^d)$ time.
 
 \section{An imprecise spanner with quadratic size}\label{sec:spanner-n2-edge}
 In this section, we present a set of pairwise disjoint convex regions in the plane such that any imprecise $t$-spanner for the regions, for any given $t>1$, must be the complete graph. This shows that it is not interesting to study imprecise spanners for any set of regions.
 
 Let $n\geq2$ be an integer, and define $\theta:=2\pi/n$. If we rotate the positive $x$-axis by angles $i\theta$, for each $i$ with $0\le i <n$, then we get $n$ rays. We number the rays starting from the positive $x$-axis and in counter-clockwise order. We denote the set of all these rays by~$\mathcal{R}^o_n$.
 
 Let us model an imprecise point as a line segment, and let $O^t_n$ be a set of pairwise disjoint line segments in the plane that is constructed as follows. Let $D_1$ and $D_2$ be two disks centered at the origin that have radii $0.4$ and $(t+1)/2$, respectively. Let $p_i$ and $q_i$, for $0\leq i<n$, be the intersections of $i$-th ray in $\mathcal{R}^o_n$ with the boundaries of $D_1$ and $D_2$, respectively. The line segment joining $p_i$ and $q_i$, denoted by $(p_i,q_i)$, is an element of $O^t_n$, see Figure~\ref{fig-stainer}. It is easy to see that $|p_iq_i|> t/2$.
 
 \begin{figure}
 \begin{center} 
 \includegraphics[width=0.35\textwidth]{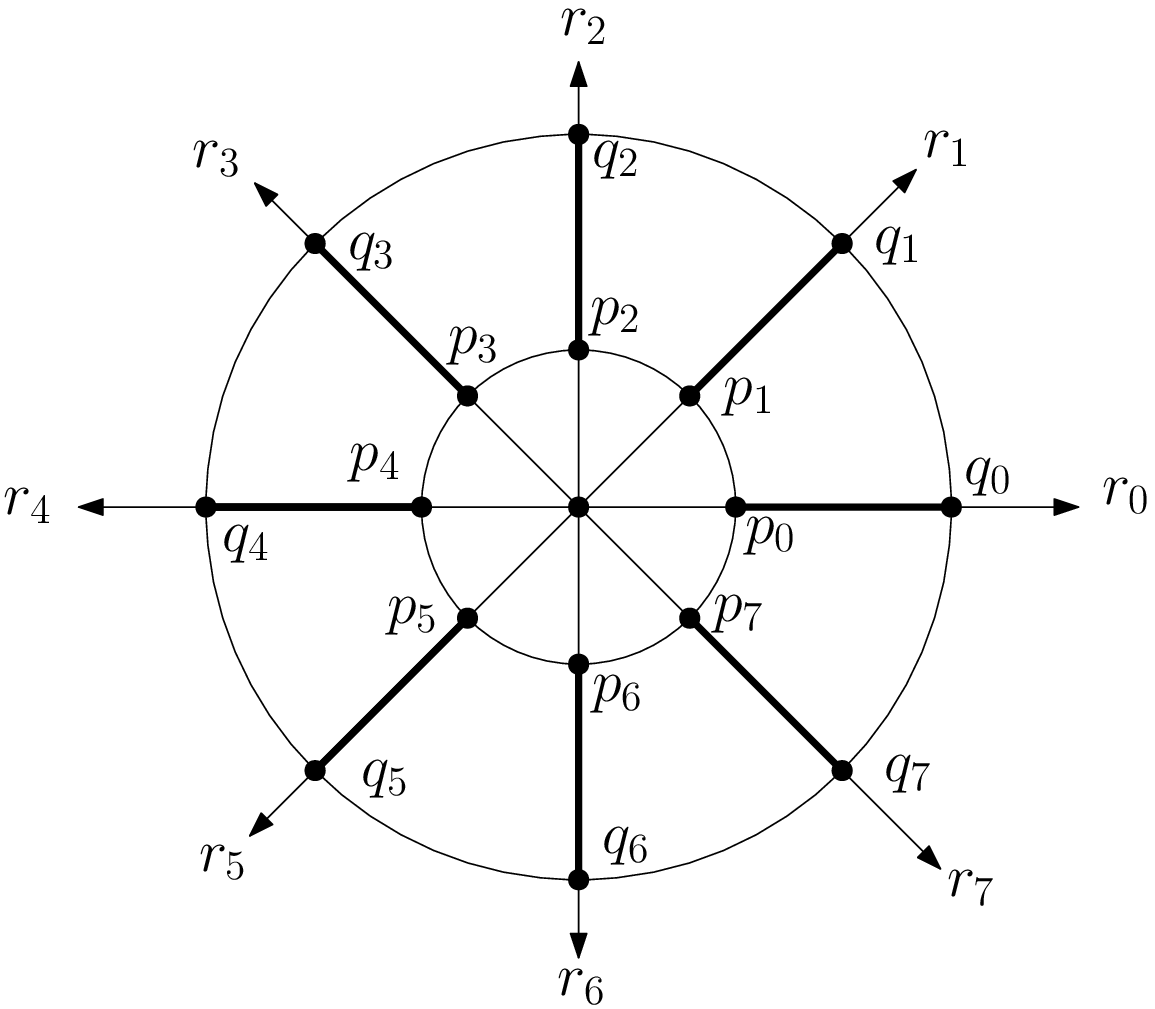}
 \end{center}
 \caption{Illustrating $O^t_8$.}
 \label{fig-stainer}
 \end{figure}
 
 \begin{lemma}\label{lem:n2-edge}
 The complete graph is the only imprecise spanner of $O^t_n$, for any $t>1$.
 \end{lemma}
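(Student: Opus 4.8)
The plan is to prove the only nontrivial direction, namely that in any imprecise $t$-spanner every pair of segments must be joined by an edge; the complete graph itself has graph distance equal to Euclidean distance, hence dilation $1$ on every precise instance, so it is certainly an imprecise $t$-spanner. I would fix two distinct segments $(p_i,q_i)$ and $(p_j,q_j)$ and assume, for contradiction, that some imprecise $t$-spanner $G=(R,E)$ omits the pair $\{(p_i,q_i),(p_j,q_j)\}$ from $E$. The goal is then to exhibit a single precise instance $S$ whose graph $G_S$ fails the $t$-spanner condition at the points chosen on these two segments.

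The instance I would use selects, on segment $i$ the inner endpoint $p_i$, on segment $j$ the inner endpoint $p_j$, and on every remaining segment $k\neq i,j$ the outer endpoint $q_k$. Two elementary estimates drive the argument. First, since $p_i$ and $p_j$ lie on the boundary of $D_1$, we have $|p_ip_j|\le 2\cdot 0.4 = 0.8 < 1$. Second, every vertex coming from a third segment is an outer point on the boundary of $D_2$, so by the reverse triangle inequality $|p_iq_k|\ge (t+1)/2 - 0.4 = t/2 + 0.1 > t/2$, and symmetrically $|p_jq_k| > t/2$, independently of the angles involved.

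Writing $a:=p_i$ and $b:=p_j$, I would next bound every $a$--$b$ path in $G_S$ from below. The vertex set of $G_S$ is exactly the chosen points, and the edge $ab$ is missing by assumption, so any simple $a$--$b$ path $a=u_0,u_1,\dots,u_m=b$ has $m\ge 2$. Its first neighbour $u_1$ is distinct from both $a$ and $b$, hence equals some outer point $q_k$, giving $|u_0u_1|>t/2$; likewise $u_{m-1}$ is an outer point and $|u_{m-1}u_m|>t/2$. As these are two distinct edges, the path has length $>t$, so $d_{G_S}(a,b)>t$. Combined with $t\,|ab|\le 0.8\,t<t$ from the first estimate, this yields $\delta_{G_S}(a,b)=d_{G_S}(a,b)/|ab|>t$, contradicting that $G_S$ is a $t$-spanner. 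Since $i,j$ were arbitrary, every pair is forced and $G$ is complete.

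I expect the delicate point to be the path bound of the last paragraph: one must control \emph{all} $a$--$b$ paths, not just the obvious two-hop detour. This rests on two structural observations rather than on any computation: the only intermediate vertices available in $G_S$ are the far outer points $q_k$, and the absence of the direct edge forces at least two hops, so the first and last edges alone already contribute more than $t/2+t/2=t$. The metric inequalities $|p_ip_j|\le 0.8$ and $|p_iq_k|,|p_jq_k|>t/2$ are immediate from the radii $0.4$ and $(t+1)/2$, and the choice of outer radius $(t+1)/2$ is exactly what makes each long edge exceed $t/2$.
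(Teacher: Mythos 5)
Your proposal is correct and follows essentially the same argument as the paper: the same precise instance (inner points $p_i,p_j$, outer points $q_k$ elsewhere), the same two metric estimates ($|p_ip_j|<1$ and $|p_iq_k|,|p_jq_k|>t/2$), and the same contradiction with the $t$-spanner property. Your explicit first-edge/last-edge bound on arbitrary paths is a slightly more careful rendering of the paper's "the shortest path passes through some $q_k$" step, but it is not a different approach.
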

 \begin{proof}
 Assume that $(p_i,q_i)$ and $(p_j,q_j)$ are two distinct line segments in $O^t_n$, where $0\leq i<j< n-1$. Let $G$ be an imprecise $t$-spanner for $O^t_n$ with no edge between $(p_i,q_i)$ and $(p_j,q_j)$.  
 Consider the precise instance $S=\{q_0,\ldots,q_{i-1},p_i,\linebreak q_{i+1},\ldots,q_{j-1},p_j,q_{j+1},\ldots,q_{n-1}\}$ of $O^t_n$, that is, choose $p_i$ and $p_j$ on $(p_i,q_i)$ and $(p_j,q_j)$, respectively, and $q_k$ on other line segments, for $0\leq k< n-1$ with $k\not=i,j$. It is clear that $|p_ip_j|<1$. Since there is no edge between $p_i$ and $p_j$ in $G_S$, the shortest path between $p_i$ and $p_j$ in $G_S$ passes through some $q_k$, for some $0\leq k< n-1$ with $k\not=i,j$. The Euclidean distance between $p_i$ and $q_k$ and the Euclidean distance between $p_j$ and $q_k$ are greater than $t/2$ and, hence, it follows that $d_{G_S}(p_i,p_j)\geq |p_iq_k|+|q_kp_j|> t$. Therefore, we get $\delta_{G_S}(p_i,p_j)> t$, which is a contradiction, because we assume that $G$ is an imprecise $t$-spanner for $O^t_n$. Hence, there must be an edge between any two distinct elements of $O^t_n$ in any imprecise $t$-spanner for~$O^t_n$.
 \end{proof}
 
 \section{An imprecise spanner for balls}\label{sec:wspd-different-disk2}
 Let $D=\{D_1,\ldots,D_n\}$ be a set of $n$ pairwise disjoint $d$-dimensional balls. In this section, we present an algorithm that computes an imprecise spanner for $D$ with ${\cal O}(n)$ edges in ${\cal O}(n\log n)$ time. The algorithm uses the WSPD \cite{callahan1995decomposition,ns-gsn-07b} for computing the imprecise spanner.
 
 \subsection{A well-separated pair for balls}\label{sec:def-wspd-ip}
 Let $X$ be a bounded point set of $\mathbb{R}^d$. We define bounding box of $X$, denoted by $R(X)$, as the smallest axes-parallel $d$-dimensional hyperrectangle that contains $X$. A $d$-dimensional hyperrectangle $R$ is the Cartesian product of $d$ closed intervals. More formally, $$R=[l_1,r_1]\times[l_2,r_2]\times\ldots\times[l_d,r_d],$$
 where $l_i$ and $r_i$ are real numbers with $l_i\leq r_i$, for $1\leq i\leq d$. We denote the length of $R$ in the $i$-th dimension by $L_i(R)=r_i-l_i$. We denote the maximum and minimum lengths of $R$ by $L_{\max}(R)$ and $L_{\min}(R)$, respectively. Let $C_X$ be a $d$-dimensional ball that contains $R(X)$. We denote the distance between two disjoint $d$-dimensional balls $C$ and $C'$ by $d(C,C')$, i.e., $$d(C,C')=|cc'|-(r+r'),$$
 where $c$ and $r$ are the center and radius, respectively, of $C$, and $c'$ and $r'$ are the center and radius, respectively, of $C'$. (Clearly, if $C$ or $C'$ is a point, then its radius is zero.)
 
 \begin{definition}\cite{callahan1995decomposition,ns-gsn-07b}\label{def:well-seperated-pair-on-points}
 Let $s>0$ be a real number, and let $A$ and $B$ be two finite sets of points in $\mathbb{R}^d$. We say that $A$ and $B$ are well-separated with respect to $s$ (or $s$-well-separated) if there are two disjoint $d$-dimensional balls $C_A$ and $C_B$, such that
 \begin{enumerate}
 \item
 $C_A$ and $C_B$ have the same radius, and
 \item
 $d(C_A,C_B)\geq s\times \mbox{radius}(C_A)$.
 \end{enumerate}
 \end{definition}
 In the following, we define $s$-well-separated for sets $A$ and $B$ of balls. Assume that $A$ or $B$ contains at least one nondegenerate ball, i.e., a ball with a positive radius. Let $D=\{D_1,\ldots,D_n\}$ be a set of $n$ pairwise disjoint $d$-dimensional balls with arbitrary sizes, and let $c_i$ be the center of $D_i$, for all $1\leq i\leq n$. For any $A\subseteq D$, let $A'=\{c_i|D_i\in A\}$.
 \begin{definition}\label{def:well-seperated-pair}
 Let $s>0$ be a real number, and let $A$ and $B$ be two nonempty subsets of~$D$. We say that $A$ and $B$ are well-separated with respect to $s$ (or $s$-well-separated) if there are two disjoint $d$-dimensional balls $C_{A'}$ and $C_{B'}$ with the same radius, such that one of the following conditions holds:
 \begin{itemize}
 \item[$\bullet$]
 $|A|=|B|=1$,
 \item[$\bullet$]
 $A=\{D_k\}$, for some $1\leq k\leq n$, $|B|>1$, and $d(c_k,C_{B'})-r_k\geq (3s+4)\times \mbox{radius}(C_{B'}),$
 \item[$\bullet$]
 $|A|>1$, $B=\{D_k\}$ for some $1\leq k\leq n$, and $d(c_k,C_{A'})-r_k\geq (3s+4)\times \mbox{radius}(C_{A'}),$ or 
 \item[$\bullet$]
 $|A|>1$, $|B|>1$, and $d(C_{A'},C_{B'})\geq (3s+4)\times \mbox{radius}(C_{A'}).$
 \end{itemize}
 \end{definition}
 
 It is easy to see that if all balls of $A$ and $B$ are degenerate (balls with radius $0$) and $A$ and $B$ are well-separated with respect to $s$ by Definition~\ref{def:well-seperated-pair}, then $A$ and $B$ are well-separated with respect to $s$ by Definition~\ref{def:well-seperated-pair-on-points}, too. In the rest of the paper, we accept the following convention. Let $A$ and $B$ be $s$-well-separated. If both $A$ and $B$ contain only points of $\mathbb{R}^d$, then $A$ and $B$ are $s$-well-separated by Definition~\ref{def:well-seperated-pair-on-points}. If $A$ or $B$ contains at least one nondegenerate ball, then $A$ and $B$ are $s$-well-separated by Definition~\ref{def:well-seperated-pair}. Let $S=\{p_1,\ldots,p_n\}$, where $p_i\in D_i$ for each $i$ with $1\leq i\leq n$, be a precise instance of $D$, and for any $A\subseteq D$, let $A_S=\{p_i\in S|D_i\in A \}$.
 
 \begin{lemma}\label{lem:wspd-for-D}
 Let $A$ and $B$ be two nonempty subsets of $D$ that are well-separated with respect to $s$, where $s>0$ is a real number and $A$ or $B$ contains at least one nondegenerate ball. Let $S=\{p_1,\ldots,p_n\}$ be an arbitrary precise instance of $D$, where $p_i\in D_i$ for all $1\leq i\leq n$. Then, $A_S$ and $B_S$ are $s$-well-separated.
 \end{lemma}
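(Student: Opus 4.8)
The plan is to produce, for each of the four cases of Definition~\ref{def:well-seperated-pair}, two disjoint balls of \emph{equal} radius that enclose $A_S$ and $B_S$ and satisfy the separation inequality of Definition~\ref{def:well-seperated-pair-on-points}. Write $\rho$ for the common radius of $C_{A'}$ and $C_{B'}$ and let $o_A,o_B$ be their centers. The whole argument rests on one geometric observation: whenever $|A|>1$, every ball $D_i\in A$ has radius $r_i\le 2\rho$. Indeed, picking any other $D_j\in A$, both centers lie in $C_{A'}$ so $|c_ic_j|\le 2\rho$, and disjointness of $D_i,D_j$ gives $r_i\le r_i+r_j\le|c_ic_j|\le 2\rho$. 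Hence each chosen point $p_i\in D_i$ satisfies $|p_io_A|\le|p_ic_i|+|c_io_A|\le r_i+\rho\le 3\rho$, so $A_S$ lies in the ball $\widetilde C_A$ of radius $3\rho$ concentric with $C_{A'}$ (and symmetrically $B_S\subseteq\widetilde C_B$). This factor-$3$ blow-up is precisely what the constant $3s+4$ in Definition~\ref{def:well-seperated-pair} is tuned to absorb.

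I would dispose of the two easy cases first. If $|A|=|B|=1$, then $A_S=\{p_i\}$ and $B_S=\{p_j\}$ are distinct points (distinct because $D_i,D_j$ are disjoint), and any two distinct points are $s$-well-separated, e.g.\ by placing balls of radius $|p_ip_j|/(s+2)$ at each. If $|A|>1$ and $|B|>1$, I take $\widetilde C_A,\widetilde C_B$ as above, both of radius $3\rho$; from $d(C_{A'},C_{B'})=|o_Ao_B|-2\rho\ge(3s+4)\rho$ one gets $|o_Ao_B|\ge(3s+6)\rho$, whence
\[
d(\widetilde C_A,\widetilde C_B)=|o_Ao_B|-6\rho\ge 3s\rho=s\cdot\mathrm{radius}(\widetilde C_A),
\]
which is exactly the required condition.

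The remaining symmetric case, say $A=\{D_k\}$ with $|B|>1$, is where care is needed and is the main obstacle. Now $A_S=\{p_k\}$ is a single point while $B_S\subseteq\widetilde C_B$ has radius $3\rho$, so to meet the equal-radius requirement I must also enclose $p_k$ in a ball $\widehat C_A$ of radius $3\rho$; a concentric or centered-at-$p_k$ placement is \emph{not} sufficient. The hypothesis $d(c_k,C_{B'})-r_k\ge(3s+4)\rho$ unwinds to $|c_ko_B|\ge(3s+5)\rho+r_k$, and since $|p_kc_k|\le r_k$ this yields $|p_ko_B|\ge(3s+5)\rho$; note that the $-r_k$ slack in the definition is exactly what compensates for $p_k$ drifting toward $C_{B'}$. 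I would then place $\widehat C_A$ offset \emph{away} from $o_B$, centered on the ray from $o_B$ through $p_k$ at distance $3\rho$ beyond $p_k$, so that it still contains $p_k$ and $d(\widehat C_A,\widetilde C_B)=|p_ko_B|-3\rho\ge(3s+2)\rho\ge 3s\rho=s\cdot 3\rho$.

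The hard part will thus be twofold: establishing the radius bound $r_i\le 2\rho$ (the engine that controls the blow-up) and verifying that the offset ball in the mixed case simultaneously contains $p_k$, stays disjoint from $\widetilde C_B$, and meets the separation bound. Everything closes because $3s+4$ leaves a cushion of $+2$ (resp.\ $+6$) over the bare $3s$ needed after expansion; once the three case computations above are carried out in full, $A_S$ and $B_S$ are $s$-well-separated in the sense of Definition~\ref{def:well-seperated-pair-on-points}.
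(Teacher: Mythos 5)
Your proposal is correct and follows essentially the same route as the paper's proof: the same factor-$3$ blow-up of $C_{A'}$, $C_{B'}$ to concentric radius-$3\rho$ balls (justified by the observation that $|A|>1$ forces $r_i\le 2\rho$), the same arithmetic in the two-large-sets case, and the same offset ball touching $p_k$ on the side away from $C_{B'}$ in the mixed case. The only differences are cosmetic: you spell out the disjointness argument behind $r_i\le 2\rho$, which the paper leaves as a parenthetical remark, and you bound $|p_k o_B|$ directly where the paper routes the same estimate through $d(D_k,C_B)$.
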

 
 \begin{proof}
Recall that for any $A\subseteq D$, we have $A'=\{c_i|D_i\in A\}$, where $c_i$ is the center of $D_i$. Since $A$ and $B$ are $s$-well-separated, by Definition~\ref{def:well-seperated-pair}, there are disjoint $d$-dimensional balls $C_{A'}$ and $C_{B'}$ with the same radius, such that one of the following cases holds for $A$ and $B$. In each case, we prove that $A_S$ and $B_S$ are $s$-well-separated, by Definition~\ref{def:well-seperated-pair-on-points}. 
 \begin{itemize}
 \item[$\bullet$]
 $|A|=|B|=1$. 
 
 Since both $A$ and $B$ are singletons, it is clear that $A_S$ and $B_S$ are $s$-well-separated.
 \item[$\bullet$] 
 $A=\{D_k\}$ for some $1\leq k\leq n$, $|B|>1$ and $d(c_k,C_{B'})-r_k\geq (3s+4)\times \mbox{radius}(C_{B'})$. 
 
 Let $\rho:=\mbox{radius}(C_{B'})$, and let $C_B$ be a $d$-dimensional ball with radius $3\rho$ co-centered with $C_{B'}$. Since $|B|>1$, the radius of each ball in $B$ is at most $2\rho$. (If $B$ contains a ball with the radius greater than $2\rho$, then $B$ is a singleton, contradicting our assumption that $|B|>1$.) So, $C_B$ contains all balls in $B$. Also, it is easy to see that $C_B$ contains bounding box $R(B_S)$. Therefore, 
 \begin{eqnarray*}
 d(c_k,C_B)-r_k&=&d(c_k,C_{B'})-2\rho-r_k\\
 &\geq& (3s+4)\times\rho-2\rho\\
 &=&(3s+2)\times\rho.
 \end{eqnarray*}
 Consider a $d$-dimensional ball $C_{A_S}$ with radius $3\rho$ that is centered at a point on the line passing through $p_k(\in S)$ and the center of $C_B$, such that $p_k$ is on the boundary of $C_{A_S}$ and $p_k$ is between the centers of $C_{A_S}$ and $C_B$. See Figure~\ref{fig-wsp-a-b2}. Since $A=\{D_k\}$ and $C_{A_S}$ contains $p_k(\in D_k)$,  ball $C_{A_S}$ contains bounding box $R(A_S)$.
 It follows that 
 \begin{eqnarray*} 
 d(C_{A_S},C_B)&\geq& d(D_k,C_B)\\
 &=& d(c_k,C_B)-r_k\\
  &\geq&(3s+2)\rho\\
 &\geq& s\times(3\rho).
 \end{eqnarray*} 
 So, there are $d$-balls $C_{A_S}$ and $C_B$ with radii $3\rho$ containing $R(A_S)$ and $R(B_S)$, respectively, such that $d(C_{A_S},C_B)\geq s\times(3\rho)$. It follows that $A_S$ and $B_S$ are $s$-well-separated.
 \begin{figure}
 \begin{center} 
 \includegraphics[width=0.6\textwidth]{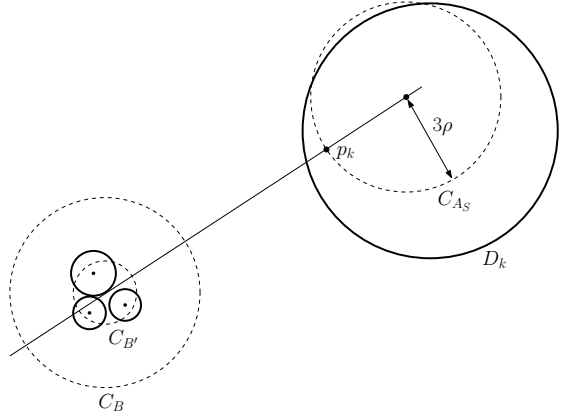}
 \end{center}
 \caption{Illustrating $C_{A_S}$ for $A=\{D_k\}$ and $B$, where $|B|>1$, in the plane for the second case of Lemma~\ref{lem:wspd-for-D}.}
 \label{fig-wsp-a-b2}
 \end{figure}
 
 \item[$\bullet$]
 
 $|A|>1$, $B=\{D_k\}$ for some $1\leq k\leq n$, and $d(c_k,C_{A'})-r_k\geq s'\times \mbox{radius}(C_{A'})$.
 
 The proof is similar to the previous case.
 \item[$\bullet$]
 $|A|>1$, $|B|>1$, and $d(C_{A'},C_{B'})\geq (3s+4)\times \mbox{radius}(C_{A'})$.
 
 Let $\rho:=\mbox{radius}(C_{A'})=\mbox{radius}(C_{B'})$, and let $C_A$ and $C_B$ be two $d$-dimensional balls with radii $3\rho$ co-centered with $C_{A'}$ and $C_{B'}$, respectively. Hence, $C_A$ contains bounding box $R(A_S)$ and $C_B$ contains bounding box $R(B_S)$. We get
 \begin{eqnarray*} 
 d(C_A,C_B)&=& d(C_{A'},C_{B'})-4\rho\\
  &\geq&(3s+4)\rho-4\rho\\
 &=&s\times(3\rho).
 \end{eqnarray*} 
 Therefore, $A_S$ and $B_S$ are $s$-well-separated.
 \end{itemize}
 So, we prove that if $A$ and $B$ are $s$-well-separated, then $A_S$ and $B_S$ are $s$-well-separated.
 \end{proof}
 
 \subsection{The WSPD for balls}
 
 Recall that $D=\{D_1,\ldots,D_n\}$ is a set of $n$ pairwise disjoint $d$-dimensional balls with arbitrary sizes.
 \begin{definition}\label{def:wspd}(Well-Separated Pair Decomposition of balls).
  Let $s>0$ be a real number. A well-separated pair decomposition (WSPD) for $D$, with respect to $s$, is a set $$\{\{A_1,B_1\},\{A_2,B_2\},\ldots,\{A_m,B_m\}\}$$ of pairs of nonempty subsets of $D$, for some integer $m$, such that
 \begin{enumerate}
 \item
 for any $i$ with $1\leq i\leq m$, $A_i$ and $B_i$ are $s$-well-separated (by Definition~\ref{def:well-seperated-pair}), and
 \item
 for any two distinct balls $D_p$ and $D_q$ of $D$, where $1\leq p,q\leq n$, there is a unique index $i$ with $1\leq i\leq m$, such that
 \begin{itemize}
 \item[$\bullet$]
 $D_p \in A_i$ and $D_q\in B_i$, or 
 \item[$\bullet$]
 $D_p \in B_i$ and $D_q\in A_i$.
 \end{itemize}
 \end{enumerate}
 \end{definition}
 
 We call $m$ as the size of the WSPD. Recall that if $S=\{p_1,\ldots,p_n\}$ is an arbitrary precise instance of $D$, then for any $A\subseteq D$, we have $A_S=\{p_i\in S|D_i\in A\}$.
 
 \begin{lemma}\label{lem:wspd-to-spanner}
 Let $s>0$ be a real number, and let $S=\{p_1,\ldots,p_n\}$ be an arbitrary precise instance of $D$. If $\{\{A_i,B_i\}|1\leq i\leq m\}$ is a WSPD for $D$ with respect to $s$, then $\{\{A_{i_S},B_{i_S}\}|1\leq i\leq m\}$ is a WSPD for $S=\{p_1,\ldots,p_n\}$ with respect to $s$.
 \end{lemma}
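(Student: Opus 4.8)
The plan is to verify directly that the collection $\{\{A_{i_S},B_{i_S}\}\mid 1\le i\le m\}$ satisfies the two defining conditions of a WSPD for the point set $S$: that each pair is $s$-well-separated (by Definition~\ref{def:well-seperated-pair-on-points}) and that the unique-pair covering property holds. The structural fact I would exploit throughout is the index-preserving correspondence between balls and sample points: from the definition $A_S=\{p_j\in S\mid D_j\in A\}$ we read off the equivalence $p_j\in A_{i_S}\iff D_j\in A_i$ (and likewise for $B$), so membership of a point in a transferred set mirrors membership of its ball in the original set.

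First I would dispatch the separation condition. Fix an index $i$. If the pair $\{A_i,B_i\}$ contains at least one nondegenerate ball, then Lemma~\ref{lem:wspd-for-D} applies verbatim and yields that $A_{i_S}$ and $B_{i_S}$ are $s$-well-separated. The only remaining case is when every ball in $A_i\cup B_i$ is degenerate; but then each such ball equals its own center, so $A_{i_S}=A_i'$ and $B_{i_S}=B_i'$, and these are already $s$-well-separated by Definition~\ref{def:well-seperated-pair-on-points} under the convention adopted for all-degenerate pairs. Since $A_i$ and $B_i$ are nonempty, so are $A_{i_S}$ and $B_{i_S}$; hence condition~(1) holds in every case.

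Next I would verify the covering property. Take two distinct points $p,q\in S$, say $p=p_a$ and $q=p_b$; since $p\neq q$ we must have $a\neq b$, so $D_a$ and $D_b$ are two distinct balls of $D$. Applying condition~(2) of the WSPD for $D$ to $D_a$ and $D_b$ produces a unique index $i$ with, say, $D_a\in A_i$ and $D_b\in B_i$ (the symmetric case is identical), which by the equivalence above is exactly $p_a\in A_{i_S}$ and $p_b\in B_{i_S}$. For uniqueness, if some index $j$ also separated $p$ and $q$, then the same equivalence shows $j$ separates $D_a$ and $D_b$ in the WSPD for $D$, forcing $j=i$. Thus there is exactly one such index, establishing condition~(2).

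The lemma is therefore essentially a transfer result, and I do not expect a genuine obstacle: the separation half is already carried out in Lemma~\ref{lem:wspd-for-D}, and the covering half reduces to the observation that a point lies in $A_{i_S}$ precisely when its ball lies in $A_i$. The only point requiring care is the degenerate edge case of condition~(1), where Lemma~\ref{lem:wspd-for-D} does not formally apply and one must fall back directly on the definition of $s$-well-separatedness for point sets.
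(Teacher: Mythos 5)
Your proposal is correct and takes essentially the same route as the paper, whose entire proof is a one-line appeal to Lemma~\ref{lem:wspd-for-D} with the rest declared straightforward. You simply fill in the details the paper leaves implicit: the all-degenerate edge case (where Lemma~\ref{lem:wspd-for-D} does not formally apply and one falls back on Definition~\ref{def:well-seperated-pair-on-points} via the paper's convention) and the transfer of the unique-covering property through the ball--point correspondence.
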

 \begin{proof}
 By Lemma~\ref{lem:wspd-for-D}, the proof is straightforward.
 \end{proof}
 
 If we can compute a WSPD for $D$, then (by Lemma~\ref{lem:wspd-to-spanner}) we can compute a WSPD for any precise instance of $D$. Callahan and Kosaraju \cite{callahan1995decomposition} used the split tree to compute a WSPD for a point set in $\mathbb{R}^d$. We also use the split tree to compute a WSPD for $D$. 
 \begin{figure}
 \begin{center} 
 \includegraphics[width=0.45\textwidth]{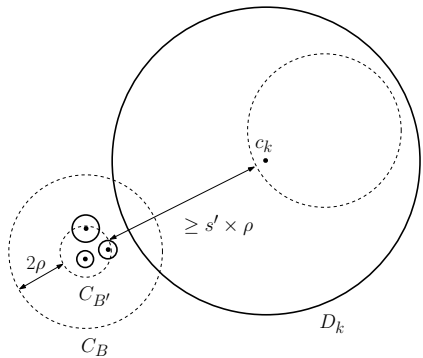}
 \end{center}
 \caption{$A'=\{c_k\}$ and $B'$, where $|B'|>1$, are well-separated with respect to \mbox{$s':=3s+6$}, but $A=\{D_k\}$ and $B$ are not $s$-well-separated.}
 \label{fig-n-wsp-a-b}
 \end{figure}

 To compute a WSPD of $D$, we construct a split tree $T$ on centers of all balls in~$D$. Then, we construct  a WSPD $W'$ of the centers with respect to $3s+6$ using~$T$. Next, we transform $W'$ to  a WSPD of  $D$, denoted by $W$, in the following way. For each pair $\{A',B'\}$  in $W'$, if both $A'$ and $B'$ are singletons  or both $A'$ and $B'$ contain more than one element, then we add  $\{A,B\}$ to~$W$, where $X$ is the set of all balls in that their centers are in $X'$. 
 Note that, by Definition~\ref{def:well-seperated-pair}, $A$ and $B$ are well-separated with respect to $s$. 
 Otherwise, one of sets $A'$ and $B'$ is a singleton and the other one contains more than one element.
 In this case, it is possible that $A$ and $B$ are not $s$-well-separated, see Figure~\ref{fig-n-wsp-a-b}. Without loss of generality, we assume that $|A'|=1$ and $|B'|>1$.  We check pair $\{A,B\}$ to see if it is a $s$-well-separated pair (by Definition \ref{def:well-seperated-pair}). If it is a $s$-well-separated pair, then we add it to $W$ and otherwise we partition $B'$ to $\{B'_i\}_i$ such that $\{A, B_i\}$ are $s$-well-separated pairs and then add them to~$W$. For the details of the algorithm, see Algorithm~\ref{alg:WSPD}.
 
 \begin{algorithm}
 \SetAlgoLined
 \KwIn{ $D=\{D_1,\ldots,D_n\}$ is a set of $n$ balls in $R^d$ with arbitrary sizes and $s$ is a positive real number.}
 \KwOut{A  well-separated pair decomposition of $D$  with respect to $s$.}
  $F:=\{c_1,\ldots,c_n\}$, where $c_i$ is the center of  $D_i$\;
  $T:=$ a split tree of $F$\;
  $W':=$ a WSPD of  $F$ with respect to $3s+6$ using $T$\;
  $W:=\emptyset$\;
  \ForEach{$\{A',B'\}\in W'$}
  {
  \uIf{ $(|A'|=1\wedge|B'|=1)\vee(|A'|>1\wedge|B'|>1)$}
          {
          Add $\{A,B\}$ to $W$\;
          }
    \Else
          {
 		 {\tcc{assume $|A'|=1$ and $|B'|>1$}
 		 $v:=$ the leaf in $T$ corresponding to~$A'$\; 
 		 $w:=$ the node in $T$ corresponding to~$B'$\;
 		 Add pairs generated by {\sc FindPairs}$(T,v,w)$ to $W$\;
 		 }
          }
  }
  		 \Return $W$\;
  \caption{{\sc ComputeWSPD}$(D,s)$\label{alg:WSPD}}
 \end{algorithm}
 \medskip
In the following, we explain the details of the way of partitioning~$B'$. We know that $T$ is a split tree on the centers of all balls in $D$. For any node $u$ of $T$, let $S_u$ be the set of all points that are stored in the subtree of $u$. Let $\{A',B'\}$ be a pair of $W'$ such that $A'=\{c_k\}$, for some $1\leq k\leq n$, and $|B'|>1$. Assume that $v$ and $w$ are the nodes of $T$ such that $S_v=A'$ and $S_w=B'$. Obviously, $v$ is a leaf and $w$ is an internal node of $T$. 
Note that for each node $w$ in the split tree, the bounding box of $S_w$, denoted by $R(w)$, is stored at $w$. So, we can test in ${\cal O}(1)$ time whether there is a ball $C_{B'}$ containing $B'$ such that $d(c_k,C_{B'})-r_k\geq (3s+4)\times \mbox{radius}(C_{B'})$. To this end, let  $C_{B'}$ be the $d$-dimensional ball of radius $(\sqrt{d}/2)\times L_{\max}R(w)$ centered at the center of $R(w)$, where  $L_{\max}R(w)$ is the length of the longest side of $R(w)$ and the center of $R(w)$ is the intersection of perpendicular bisecting hyperplanes of sides of $R(w)$. If $d(c_k,C_{B'})-r_k\geq (3s+4)\times \mbox{radius}(C_{B'})$, then $\{A,B\}$ is a $s$-well-separated pair and so we add $\{A,B\}$ to~$W$. Otherwise,  we follow the above process by $\{v,w_l\}$ and $\{v,w_r\}$, where $w_l$ and $w_r$ are the left and the right children of $w$, respectively.

 For details of the partition algorithm, denoted by {\sc FindPairs}$(T,v,w)$,  see algorithm~\ref{alg:FINDPAIRS}. We may assume without loss of generality that always $|S_v|=1$, that is, $v$ is a leaf of $T$. Clearly, the algorithm {\sc FindPairs}$(T,v,w)$ terminates.
 
 \begin{algorithm}
 \SetAlgoLined
 \KwIn{An split tree $T$  and a pair $\{v,w\}$, where $v$ is a leaf and $w$ is an internal node of the split tree $T$.
 }
 \KwOut{A collection of well-separated pairs $\{A,B\}$ with respect to $s$, where $A'=S_v=\{c_k\}$ and $B'\subseteq S_w$.}
  \If{there is a ball $C_{S_w}$ such that $d(c_k,C_{S_w})-r_k\geq (3s+4)\times\mbox{radius}(C_{S_w})$}
          {
          \Return the pair $\{A,B\}$, where $A'=S_v$ and $B'=S_w$\;
          }
             $w_l:=$ left child of $w$\;
             $w_r:=$ right child of $w$\;
             {\sc FindPairs}$(T,v,w_l)$\;
             {\sc FindPairs}$(T,v,w_r)$\;
 \caption{{\sc FindPairs}$(T,v,w)$\label{alg:FINDPAIRS}}
 \end{algorithm}
 \medskip
  
  Now, we show that the algorithm generates a WSPD of $D$ with ${\cal O}(n)$ pairs.
 \begin{lemma}\label{lem:r_C_A_GT_r_C_B}
 If $A'=\{c_k\}$, for some integer $1\leq k\leq m$, and $B'$, where $|B'|>1$, are well-separated with respect to $3s+6$ (by Definition \ref{def:well-seperated-pair-on-points}), but $A$ and $B$ are not $s$-well-separated (by Definition \ref{def:well-seperated-pair}), then $r_k=\mbox{radius}(D_k)>\sqrt{d}\times L_{\max}(R(B'))$.
 \end{lemma}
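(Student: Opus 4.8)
The plan is to prove the contrapositive by pinning down the one enclosing ball that the algorithm actually tests. Let $\rho:=(\sqrt{d}/2)\times L_{\max}(R(B'))$ and let $C_{B'}$ be the ball of radius $\rho$ centered at the center of $R(B')$; as already observed in the text this ball contains $R(B')$ and hence $B'$. I will show that if $r_k\le 2\rho=\sqrt{d}\times L_{\max}(R(B'))$, then $A=\{D_k\}$ and $B$ would in fact be $s$-well-separated by Definition~\ref{def:well-seperated-pair}, contradicting the hypothesis; since $2\rho=\sqrt{d}\times L_{\max}(R(B'))$, this gives $r_k>\sqrt{d}\times L_{\max}(R(B'))$.

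First I would record what the hypothesis ``$A'$ and $B'$ are well-separated with respect to $3s+6$'' buys us \emph{for this particular ball}. Because $A'=\{c_k\}$ is a single point, its bounding box is degenerate, so the common radius of the witnessing balls for the pair $\{A',B'\}$ is governed by $B'$ alone and equals $\rho=(\sqrt{d}/2)\times L_{\max}(R(B'))$; the witness enclosing $B'$ is exactly $C_{B'}$, and the witness $C_{A'}$ (radius $\rho$) contains $c_k$. From $d(C_{A'},C_{B'})\ge (3s+6)\rho$ together with $c_k\in C_{A'}$, the definition of distance between sets gives $d(c_k,C_{B'})\ge d(C_{A'},C_{B'})\ge (3s+6)\rho$.

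With this in hand the contradiction is a short computation. Assuming $r_k\le 2\rho$, I combine it with the previous inequality to get
\[
d(c_k,C_{B'})-r_k\ \ge\ (3s+6)\rho-2\rho\ =\ (3s+4)\rho\ =\ (3s+4)\times\mbox{radius}(C_{B'}).
\]
It then remains to exhibit a companion ball $C_{A'}$ of radius $\rho$ containing $c_k$ and disjoint from $C_{B'}$: taking $C_{A'}$ centered at $c_k$ works, since $d(c_k,C_{B'})\ge(3s+6)\rho$ forces the center-to-center distance to be at least $(3s+7)\rho>2\rho$. Thus $(C_{A'},C_{B'})$ satisfies the second bullet of Definition~\ref{def:well-seperated-pair}, so $A$ and $B$ are $s$-well-separated, a contradiction; hence $r_k>2\rho=\sqrt{d}\times L_{\max}(R(B'))$.

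The main obstacle is the second step: translating the abstract $(3s+6)$-well-separation of the centers into the concrete inequality $d(c_k,C_{B'})\ge (3s+6)\rho$ for the specific circumball $C_{B'}$ the algorithm uses. One must be careful here, because Definition~\ref{def:well-seperated-pair-on-points} only guarantees \emph{some} pair of equal-radius enclosing balls, and a smaller enclosing ball for $B'$ would yield a weaker absolute separation and would not produce the factor $\sqrt{d}$; it is essential to use the bounding-box circumball (of radius $\rho$) as the witness. Everything else is routine, and the slack of $2\rho$ between the constant $3s+6$ used for the centers and the constant $3s+4$ used for the balls is exactly what the argument consumes, which is why the two constants in Definitions~\ref{def:well-seperated-pair-on-points} and~\ref{def:well-seperated-pair} differ by $2$.
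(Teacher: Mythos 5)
Your proposal is correct and takes essentially the same route as the paper's own proof: a proof by contradiction with $\rho=(\sqrt{d}/2)\times L_{\max}(R(B'))$, using the $(3s+6)$-separation of the centers to obtain $d(c_k,C_{B'})\geq(3s+6)\rho$ for the bounding-box circumball $C_{B'}$, subtracting $r_k\leq 2\rho$ to get $d(c_k,C_{B'})-r_k\geq(3s+4)\rho$, and concluding $s$-well-separation by Definition~\ref{def:well-seperated-pair}, contradicting the hypothesis. The extra care you take (arguing that the witness for $B'$ must be taken to be the specific ball $C_{B'}$ of radius $\rho$, and exhibiting the disjoint companion ball of radius $\rho$ centered at $c_k$) only makes explicit steps that the paper asserts without comment.
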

 \begin{proof}
 The proof is by contradiction. Assume that $r_k\leq \sqrt{d}\times L_{\max}(R(B'))$. Let $\rho:=(\sqrt{d}/2)\times L_{\max}(R(B'))$. Since $A'$ and $B'$ are well-separated with respect to $3s+6$ (by Definition \ref{def:well-seperated-pair-on-points}), we have $d(c_k,C_{B'})\geq (3s+6)\times\rho$, where $C_{B'}$ is a $d$-ball with radius $\rho$ that is centered at the center of $R(B')$. So, $$d(c_k,C_{B'})-r_k\geq (3s+4)\times\rho.$$
 Therefore, $A$ and $B$ are well-separated with respect to $s$ (by Definition \ref{def:well-seperated-pair}), a contradiction.~\end{proof}
 
 \begin{lemma}\label{lem:W-is-WSPD}
 Set $W$ is a WSPD for $D$ with respect to $s$.
 \end{lemma}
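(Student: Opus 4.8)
The plan is to verify directly the two defining conditions of a WSPD for $D$ given in Definition~\ref{def:wspd}: that every pair $\{A,B\}$ placed in $W$ is $s$-well-separated (by Definition~\ref{def:well-seperated-pair}), and that every unordered pair of distinct balls $\{D_p,D_q\}$ is covered by exactly one pair of $W$. I would treat these two conditions separately, since the first is essentially guaranteed by construction while the second requires a covering-and-uniqueness argument that traces pairs of $W$ back to their originating pairs in the center WSPD $W'$.

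For the well-separatedness condition, I would go through the three kinds of pairs the algorithm adds to $W$. If $\{A',B'\}\in W'$ has both parts singletons, then $|A|=|B|=1$ and the first bullet of Definition~\ref{def:well-seperated-pair} applies immediately (the two centers are distinct because the balls are pairwise disjoint, so equal-radius disjoint balls exist). If both parts satisfy $|A'|>1$ and $|B'|>1$, then $A'$ and $B'$ are $(3s+6)$-well-separated as point sets, so there are co-radius disjoint balls $C_{A'},C_{B'}$ with $d(C_{A'},C_{B'})\ge (3s+6)\,\mathrm{radius}(C_{A'})\ge (3s+4)\,\mathrm{radius}(C_{A'})$; this is exactly the fourth bullet, so the extra slack of $3s+6$ over $3s+4$ is what makes the transfer work. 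For the remaining (split) case, {\sc FindPairs} returns a pair $\{A,B\}$ only after explicitly testing $d(c_k,C_{S_w})-r_k\ge (3s+4)\,\mathrm{radius}(C_{S_w})$, which is precisely the second (or third) bullet when $|B|>1$, and reduces to the first bullet when the recursion bottoms out at a leaf and $B$ is a singleton. Hence every pair in $W$ is $s$-well-separated.

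For the covering condition I would set up a map sending each pair of $W$ to its \emph{origin} pair in $W'$: a non-split pair $\{A,B\}$ maps to $\{A',B'\}$, and each split pair $\{A,B_i\}$ (with $A=\{D_k\}$ and $B_i\subseteq B'$) maps to its generating $\{A',B'\}$ with $A'=\{c_k\}$. The first thing to establish is that {\sc FindPairs}$(T,v,w)$ outputs a family $\{B_i\}$ whose center sets $\{S_{w_i}\}$ form a partition of $B'=S_w$: at every node the algorithm either returns the whole $S_w$ or recurses into both children, and since in a split tree the two children's point sets partition the parent's, the recursion's returned blocks cover $S_w$ disjointly. Given distinct $D_p,D_q$, their centers are distinct, so $W'$ contains a unique pair covering $\{c_p,c_q\}$; if that pair is non-split it yields the unique covering pair in $W$, and if it is split then $c_q$ (on the non-singleton side) lies in exactly one block $S_{w_i}$ of the partition, giving exactly one covering pair $\{A,B_i\}$. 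Conversely, any pair of $W$ covering $\{D_p,D_q\}$ pushes down to a pair of $W'$ covering $\{c_p,c_q\}$, so all such pairs share the unique origin, and within a single origin the partition property forces uniqueness. Together these give both existence and uniqueness.

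I expect the main obstacle to be the split case of the covering argument: one must argue carefully that the recursion in {\sc FindPairs} terminates with a genuine partition (invoking that leaves force the test to pass, since there $\mathrm{radius}(C_{S_w})=0$ and disjointness of the balls gives $|c_kc_l|-r_k>0$) and then combine this partition with the uniqueness already guaranteed by $W'$ at the level of centers. The both-singleton and both-big cases, and the entire well-separatedness condition, are routine once the right inequality slack ($3s+6$ versus $3s+4$) and Definition~\ref{def:well-seperated-pair} are unwound.
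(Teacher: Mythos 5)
Your proof is correct and follows essentially the same route as the paper's: verify condition (1) of Definition~\ref{def:wspd} by a case analysis on how each pair of $W$ was generated (both-singleton, both-large with the $3s+6$ versus $3s+4$ slack, and the {\sc FindPairs} test), and derive condition (2) from the unique-coverage property of the center WSPD $W'$ together with the fact that {\sc FindPairs} partitions the non-singleton side. The paper's own proof is only a two-line sketch---condition (1) is declared ``easy to see'' and condition (2) is deferred to the cited split-tree WSPD results---so your write-up simply supplies the details (the origin map, the partition property, and termination at leaves) that the paper leaves implicit.
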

 \begin{proof}
 It is easy to see that for all $\{A,B\}\in W$, sets $A$ and $B$ are $s$-well-separated (by Definition \ref{def:well-seperated-pair}). By \cite{callahan1995decomposition,ns-gsn-07b}, the proof of the second condition in Definition~\ref{def:wspd} is straightforward.
 \end{proof}
 It remains to prove an upper bound on $|W|$.  We can partition the pairs in $W$ into two categories. In the first category, there are pairs $\{A,B\}$ such that $\{A',B'\}$ is in~$W'$. Since the size of $W'$ is linear in $n=|D|$, obviously the number of pairs in this category is linear in $n$. The second category contains the pairs that generated by partitioning the sets in pairs of $W'$. In the following lemma, we show that the number of pairs in this category is also linear in $n$. To this end, we show that any set $B$ appears in at most a constant number of pairs in this category. Note that each pair in this category contains a singleton and a set that may contain more than one element.
  
 Let $Z$ be the set of all pairs of $W'$ that {\sc FindPairs} returns at least two pairs. More precisely, let $$Z=\{\{A'_i,B'_i\}| 1\leq i\leq q, \{A'_i,B'_i\}\in W',|A'_i|=1,|B'_i|>1\},$$ such that $S_v=A'_k$, for some leaf $v$ of $T$, and $S_w=B'_k$, for some node $w$ of~$T$, and algorithm {\sc FindPairs}$(v,w)$ returns at least two pairs, for all $k$ between $1$ and~$q$. Let $\{A_k,B\}$ be some pair returned by algorithm {\sc FindPairs}$(v,w)$ such that $B'=S_u\subset S_w$, for some node $u$ of~$T$. 
 In the following, we apply a packing argument (similar to Lemma 9.4.3 of \cite[Chapter 9]{ns-gsn-07b}) to prove that each $B$ is involved in at most a constant number (dependent only on $s$ and $d$) of pairs in $W$. Let $\pi(u)$ be the parent of node $u$ of $T$, except for the root.
  
 \begin{lemma}\label{lem:constant-pair}
 Set $B$ involved in  at most ${(3s+9)^d\times\Gamma(d/2+1)}/{\pi^{d/2}}$ pairs in $W$, where $\Gamma$ denotes Euler's gamma-function.
 \end{lemma}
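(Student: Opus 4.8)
The plan is to show that every pair of $W$ containing the fixed set $B$ comes from the partitioning process and is matched with a \emph{singleton}, and then to bound the number of admissible singletons by a packing argument. Write $B=S_u$ for the node $u$ of $T$ that produced $B$, and let $\pi(u)$ be its parent. A partition pair $\{A_k,B\}\in W$ with $A_k=\{D_k\}$ and $A_k'=\{c_k\}$ is created precisely when the call {\sc FindPairs} started from the leaf $v_k$ (with $S_{v_k}=\{c_k\}$) halts its recursion exactly at $u$; by the covering property of the WSPD each center $c_k$ can do this for $B$ at most once. Hence it suffices to bound the number of distinct $c_k$ for which {\sc FindPairs} halts at $u$. (Any appearance of $B$ in a first-category pair is controlled by the classical split-tree analysis of \cite{callahan1995decomposition,ns-gsn-07b}, so the partition count is the one matching the stated constant.)

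First I would extract three geometric facts for each such $c_k$. Set $\rho:=(\sqrt{d}/2)\,L_{\max}(R(u))$ and $\rho':=(\sqrt{d}/2)\,L_{\max}(R(\pi(u)))$, and let $O$ be the center of $R(\pi(u))$; since $R(u)\subseteq R(\pi(u))$ we have $\rho\le\rho'$. (i) Because the {\sc FindPairs} test succeeded at $u$, sets $\{D_k\}$ and $B$ are $s$-well-separated by Definition~\ref{def:well-seperated-pair}, so $D_k$ lies far from the canonical ball $C_{B'}$ around $B$. (ii) Because the recursion passed through $\pi(u)$ without stopping, the test failed there, i.e.\ $d(c_k,C_{S_{\pi(u)}})-r_k<(3s+4)\rho'$; unwinding the definition of $d(\cdot,\cdot)$ shows that $D_k$ comes within $(3s+5)\rho'$ of $O$. (iii) Since $S_{\pi(u)}\subseteq S_{w_k}$, the pair $\{c_k\},S_{\pi(u)}$ inherits $(3s+6)$-well-separation as point sets (the same equal-radius enclosing balls still work), while $\{D_k\}$ and $S_{\pi(u)}$ are not $s$-well-separated; Lemma~\ref{lem:r_C_A_GT_r_C_B} then yields the crucial radius bound $r_k>\sqrt{d}\,L_{\max}(R(\pi(u)))=2\rho'$.

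The packing step combines (ii) and (iii). Each $D_k$ is a ball of radius $r_k>2\rho'$ that comes within $(3s+5)\rho'$ of $O$; sliding inward from its closest point to $O$ by $\rho'$ produces a ball $G_k\subseteq D_k$ of radius exactly $\rho'$ whose center is within $O(s)\rho'$ of $O$. The balls $G_k$ inherit pairwise disjointness from the pairwise-disjoint $D_k$, and all of them lie inside a bounded axis-parallel $d$-cube $Q$ centered at $O$. Comparing volumes, each $G_k$ has volume $\pi^{d/2}\rho'^{\,d}/\Gamma(d/2+1)$, so disjointness forces the number of $G_k$, and hence the number of admissible $c_k$, to be at most $\mbox{vol}(Q)/\mbox{vol}(G_k)$; tracking the constants so that $Q$ has side length $(3s+9)\rho'$ (hence $\mbox{vol}(Q)=(3s+9)^d\rho'^{\,d}$) gives exactly $(3s+9)^d\,\Gamma(d/2+1)/\pi^{d/2}$, as claimed.

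The step I expect to be the main obstacle is fact (iii) together with its use in the packing: one must verify that the $(3s+6)$-well-separation of the centers is inherited all the way down to $\pi(u)$, and that the failure of the {\sc FindPairs} test there genuinely certifies non-$s$-well-separation of the balls, so that Lemma~\ref{lem:r_C_A_GT_r_C_B} applies and delivers $r_k>2\rho'$. This lower bound on the radii is what makes the packing possible at all, since without it the large balls $D_k$ could crowd $O$ while contributing no fixed-size disjoint region; once it is in place, the only remaining work is the careful constant bookkeeping needed to enclose every $G_k$ in a cube of side exactly $(3s+9)\rho'$, which I would treat as routine.
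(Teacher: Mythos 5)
Your proof follows the same architecture as the paper's: attribute each second-category pair $\{\{D_k\},B\}$ to the node $u$ with $S_u=B'$ at which {\sc FindPairs} halts, use the failure of the test at the parent $\pi(u)$ to confine $D_k$ to a neighborhood of the center $O$ of $R(\pi(u))$, invoke Lemma~\ref{lem:r_C_A_GT_r_C_B} (via the inherited $(3s+6)$-separation of $\{c_k\}$ from $S_{\pi(u)}$, witnessed by the same enclosing balls) to lower-bound $r_k$, and finish by packing disjoint equal-radius sub-balls of the $D_k$ into a cube. The obstacle you flag in your last paragraph is real but is absorbed by the paper in the same loose way: strictly, failure of the canonical test at $\pi(u)$ does not certify non-$s$-well-separation in the existential sense of Definition~\ref{def:well-seperated-pair}, but the computation inside Lemma~\ref{lem:r_C_A_GT_r_C_B} is exactly the contrapositive of that canonical test, so ``test fails at $\pi(u)$ $\Rightarrow$ $r_k>\sqrt{d}\,L_{\max}(R(\pi(u)))$'' goes through, matching the paper.

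The genuine gap is in the constant bookkeeping that you dismissed as routine. With your normalization $\rho'=(\sqrt{d}/2)\,L_{\max}(R(\pi(u)))$, fact (ii) places the closest point of $D_k$ within $(3s+5)\rho'$ of $O$, so your ball $G_k$ of radius $\rho'$ has center within $(3s+6)\rho'$ of $O$ and reaches out to distance $(3s+7)\rho'$. A cube $Q$ of side $(3s+9)\rho'$ centered at $O$ has half-side $(1.5s+4.5)\rho'<(3s+7)\rho'$ for every $s>0$, so $Q$ cannot contain the $G_k$; the smallest admissible cube has side $(6s+14)\rho'$, and your packing then yields only $r\le (6s+14)^d\,\Gamma(d/2+1)/\pi^{d/2}$ --- a valid ${\cal O}(1)$ bound, but not the constant asserted in the lemma. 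The loss comes from using only half the strength of Lemma~\ref{lem:r_C_A_GT_r_C_B}: it gives $r_k>2\rho'$, so you can carve out of each $D_k$ a sub-ball of radius $2\rho'$ (not $\rho'$) meeting the cube of side $(3s+5)\cdot 2\rho'$ around $O$; these disjoint radius-$2\rho'$ balls all lie in a cube of side $(3s+5)\cdot 2\rho'+4\cdot 2\rho'=(3s+9)\cdot 2\rho'$, and the volume ratio then gives exactly $(3s+9)^d\,\Gamma(d/2+1)/\pi^{d/2}$. That is precisely the paper's computation, carried out with $\rho:=\sqrt{d}\,L_{\max}(R(B'_p))=2\rho'$.
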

 \begin{proof}
 Let $u$ be a node of $T$ such that $S_u=B'$, and let $B'_p=S_{\pi(u)}$. Let $x$ be the center of bounding box $R(B'_p)$, and $\rho:=\sqrt{d}\times L_{\max}(R(B'_p))$. Without loss of generality,  we assume that $\{D_1,B\},\ldots,\{D_r,B\}$ are all pairs of $W$ that contain $B$. Since $W$ is a WSPD for $D$, clearly, balls $D_i$ for all $1\leq i\leq r$ are pairwise distinct and, therefore, are pairwise disjoint. Assume $c_i$ and $v_i$ are the center and the leaf of $T$ corresponding to~$D_i$, respectively.
 
 Let $C$ be a hypercube  centered at point $x$, where $x$ is the center of bounding box $R(B'_p)$, and with  side length $(3s+5)\times\rho$.  We have $C\cap D_i\not=\emptyset$, because if $C\cap D_i=\emptyset$, then 
 \begin{eqnarray*} 
 d(c_i,C_{B'_p})-r_i&=& d(D_i,C_{B'_p})\\
 &>&  \frac{1}{2}\times\mbox{side-length}(C)-\mbox{radius}(C_{B'_p})\\
 &=& (3s+4)\times(\rho/2),
 \end{eqnarray*} 
  where $C_{B'_p}$ is a ball with center $x$ and radius $\rho/2$. (Clearly, $C_{B'_p}$ contains $R(B'_p)$.) Hence, $\{D_i\}$ and $B_p$ are $s$-well-separated (by Definition \ref{def:well-seperated-pair}), which is a contradiction because if $\{D_i\}$ and $B_p$ are well-separated with respect to $s$, then  {\sc FindPairs}$(T,v_i,\pi(u))$ finishes and does not  run {\sc FindPairs}$(T,v_i,u)$.

 Since each element of $Z$ is a well-separated pair with respect to $(3s+6)$,  the pair $\{\{c_i\},B'_p\}$ is also a well-separated pair with respect to $(3s+6)$. Since $\{D_i\}$ and $B_p$ are not $s$-well-separated, by Lemma~\ref{lem:r_C_A_GT_r_C_B}, for each $i$,  $1\leq i\leq r$, we have $r_i=\mbox{radius}(D_i)>\sqrt{d}\times L_{\max}(B'_p)=\rho$.
 
 For each $i$, let $C_i$ be a $d$-dimensional ball with radius $\rho$ such that $D_i$ contains $C_i$ and $C\cap C_i\not=\emptyset$. Since the  balls $D_i$ are pairwise disjoint, the balls $C_i$ are also pairwise disjoint.
  
 Let $C'$ be a hypercube with sides of length $(3s+5)\rho+4\rho$ and with center $x$. The length of sides of $C'$ is the sum of the length of sides of $C$ and two times the diameter of~$C_i$. Therefore, $C'$ contains all balls $C_i$, for each $i$ with $1\leq i\leq r$. The volumes of $\bigcup^r_{i=1}C_i$ and $C'$ are $r\times(\pi^{d/2}/\Gamma(d/2+1))\times\rho^d$ and $((3s+9)\times\rho)^d$, respectively. (The volume of a ball with radius $r$ in $\mathbb{R}^d$ is $({\pi^{d/2}}/{\Gamma(d/2+1)})\times r^d$.) Therefore, we get $r\times(\pi^{d/2}/\Gamma(d/2+1))\times\rho^d\leq ((3s+9)\times\rho)^d$. It follows that $$r \leq{(3s+9)^d\times\Gamma(d/2+1)}/{\pi^{d/2}},$$ which completes the proof.
 \end{proof}
 Since $T$ has ${\cal O}(n)$ nodes, it follows from Lemma \ref{lem:constant-pair} that $|Z|={\cal O}(n)$. To sum-up, we have the following result.
 
 \begin{corollary}\label{coro:size_W=O(n)}
 The set $W$ contains at most ${\cal O}(n)$ pairs.
 \end{corollary}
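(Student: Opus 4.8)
The plan is to split $W$ according to how its pairs are produced and to bound each part separately. I would first separate $W$ into a \emph{direct} category, consisting of the pairs $\{A,B\}$ that are added because the underlying pair $\{A',B'\}\in W'$ has both sides singletons or both sides of size larger than one, together with the one-sided pairs for which \textsc{FindPairs} returned a single pair (the pair was already $s$-well-separated); and a \emph{split} category, consisting of the pairs genuinely produced by recursive partitioning. Every pair in the direct category is charged to the unique pair of $W'$ it comes from, so this category has size at most $|W'|$. Since $T$ is a split tree on the $n$ centers and $W'$ is a WSPD of those centers with respect to $3s+6$, the Callahan--Kosaraju size bound \cite{callahan1995decomposition,ns-gsn-07b} gives $|W'|=O(n)$, and the direct category therefore contributes $O(n)$ pairs.

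For the split category the charging scheme has to be different, since a single pair of $W'$ may be broken into many subpairs. The key structural fact I would isolate is that \textsc{FindPairs} only ever recurses onto the two children of the current node, so the non-singleton side $B$ of every emitted pair satisfies $B'=S_u$ for some node $u$ of $T$. Hence the distinct sets $B$ that can appear in the split category are indexed by the nodes of $T$, of which there are only $O(n)$. This confines the possible right-hand sides to a linear-size family and is what makes a node-based charging possible at all.

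I would then apply Lemma~\ref{lem:constant-pair}, which has already been established by a packing argument using the pairwise disjointness of the balls: for any fixed set $B$ the number of pairs of $W$ containing $B$ is at most $(3s+9)^d\,\Gamma(d/2+1)/\pi^{d/2}$, a quantity depending only on $s$ and $d$. Charging each split pair to its set $B$, and using that each $B$ receives at most this constant charge while ranging over only $O(n)$ values, bounds the split category by $O(n)$ as well. Summing the two categories gives $|W|=O(n)$, as claimed.

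The step I expect to be the real obstacle is the second paragraph: recognizing that the recursive splitting never manufactures sets outside the family $\{S_u : u\in T\}$, so that the right-hand sides live in an $O(n)$-size universe. Once that is secured, Lemma~\ref{lem:constant-pair} furnishes the constant per-set multiplicity and the count is immediate; the content of that lemma---ruling out that one $B$ is paired with unboundedly many singletons---is exactly the packing estimate on disjoint balls, so the delicate work is already done there.
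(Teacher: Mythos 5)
Your proposal is correct and follows essentially the same route as the paper: it splits $W$ into pairs inherited directly from $W'$ (bounded by $|W'|=O(n)$) and pairs produced by \textsc{FindPairs} recursion, whose non-singleton sides are sets $S_u$ indexed by the $O(n)$ nodes of the split tree $T$, each appearing in at most the constant of Lemma~\ref{lem:constant-pair} many pairs. This two-category charging is exactly the paper's argument, so nothing further is needed.
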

 
 Lemma~\ref{lem:W-is-WSPD} and Corollary~\ref{coro:size_W=O(n)} immediately imply the following result.
 
 \begin{theorem}\label{theo:wspd-on-D}
 Let $D=\{D_1,\ldots,D_n\}$ be a set of $n$ $d$-dimensional pairwise disjoint balls with arbitrary sizes, and let $s>0$ be a real number. There is a WSPD for $D$ with respect to $s$ of size ${\cal O}({s^d\times\Gamma(d/2+1)}/{\pi^{d/2}}\times n)$. The WSPD can be computed in ${\cal O}({s^d\times\Gamma(d/2+1)}/{\pi^{d/2}}\times n\log n)$ time by an algorithm that uses ${\cal O}({s^d\times\Gamma(d/2+1)}/{\pi^{d/2}}\times n)$ space.
 \end{theorem}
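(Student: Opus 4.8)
The plan is to assemble the pieces already in place: correctness is exactly Lemma~\ref{lem:W-is-WSPD}, so the remaining work is to make the $s$- and $d$-dependence in the size bound explicit and to account for the running time and space of Algorithm~\ref{alg:WSPD}. First I would settle correctness by invoking Lemma~\ref{lem:W-is-WSPD} directly: the set $W$ returned by {\sc ComputeWSPD}$(D,s)$ is a WSPD for $D$ with respect to $s$, and nothing further is needed for this part.

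Next I would bound $|W|$ by splitting it into the two categories described just before Lemma~\ref{lem:constant-pair}. For the first category, the pairs $\{A,B\}$ with $\{A',B'\}\in W'$, the count is at most $|W'|$; since $W'$ is a WSPD of the $n$ centers with separation parameter $3s+6$, the classical point-set bound of Callahan and Kosaraju~\cite{callahan1995decomposition,ns-gsn-07b} gives $|W'|={\cal O}((3s+6)^d n)$, which is absorbed into ${\cal O}(s^d\times\Gamma(d/2+1)/\pi^{d/2}\times n)$. For the second category, the pairs produced by {\sc FindPairs}, Lemma~\ref{lem:constant-pair} shows each distinct set $B$ occurs in at most $(3s+9)^d\times\Gamma(d/2+1)/\pi^{d/2}$ pairs of $W$; since every such $B$ equals $S_u$ for some node $u$ of the split tree $T$ and $T$ has ${\cal O}(n)$ nodes, this category contributes ${\cal O}((3s+9)^d\times\Gamma(d/2+1)/\pi^{d/2}\times n)$ pairs. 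Adding the two categories yields the claimed size bound.

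Finally I would turn to the time and space. Building $T$ and computing $W'$ for the centers take ${\cal O}(n\log n+(3s+6)^d n)$ time and ${\cal O}(n)$ working space by the standard split-tree construction. In the transformation loop of Algorithm~\ref{alg:WSPD}, each direct addition costs ${\cal O}(1)$, and for every call {\sc FindPairs}$(T,v,w)$ the recursion forms a binary tree whose internal nodes each branch into their two children and whose leaves each return one pair; because the balls are pairwise disjoint the recursion always terminates with a returned pair (at the worst at a leaf of $T$), so the number of recursive calls is proportional to the number of pairs {\sc FindPairs} outputs. Summing over all pairs of $W'$, the transformation costs ${\cal O}(|W|)$, so the total running time is ${\cal O}(n\log n+s^d\times\Gamma(d/2+1)/\pi^{d/2}\times n)$, which is dominated by the stated ${\cal O}(s^d\times\Gamma(d/2+1)/\pi^{d/2}\times n\log n)$, and the space is ${\cal O}(n)+{\cal O}(|W|)$, matching the claim.

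I expect the main obstacle to be this output-sensitive accounting of {\sc FindPairs}: arguing that the recursion cost is proportional to the number of pairs it produces rests on the termination argument together with the binary branching structure, and one must then fold the ${\cal O}(n\log n)$ split-tree cost and the ${\cal O}((3s+6)^d n)$ cost of $W'$ cleanly into the stated bound. The size and correctness parts are essentially bookkeeping on top of Lemmas~\ref{lem:constant-pair} and~\ref{lem:W-is-WSPD}.
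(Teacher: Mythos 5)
Your proposal is correct and follows essentially the same route as the paper: correctness via Lemma~\ref{lem:W-is-WSPD}, and the size bound by splitting $W$ into pairs inherited from $W'$ plus pairs produced by {\sc FindPairs}, the latter bounded by Lemma~\ref{lem:constant-pair} together with the ${\cal O}(n)$ node count of the split tree (this is exactly Corollary~\ref{coro:size_W=O(n)}). The only difference is that you spell out the output-sensitive accounting of the {\sc FindPairs} recursion and the resulting time/space bounds, which the paper leaves implicit by declaring the theorem an immediate consequence of the lemma and corollary; your accounting is sound and in fact yields the slightly sharper ${\cal O}(n\log n + s^d\,\Gamma(d/2+1)/\pi^{d/2}\times n)$ running time.
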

 
 \begin{theorem}\label{theo:spanner-on-D}
 Let $D=\{D_1,\ldots,D_n\}$ be a set of $n$ pairwise disjoint balls in $\mathbb{R}^d$, and let $t>1$ be a real number. There is an imprecise $t$-spanner for $D$ with ${\cal O}(n/(t-1)^d)$ edges. This imprecise $t$-spanner can be computed in ${\cal O}(n\log n+n/(t-1)^d)$ time.
 \end{theorem}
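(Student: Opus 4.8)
The plan is to follow the classical WSPD-to-spanner construction, lifted from points to balls. First I would fix the separation parameter $s := 4(t+1)/(t-1)$, the standard value for which an $s$-WSPD yields a $t$-spanner (so that $t = (s+4)/(s-4)$). Note that for $t$ close to $1$ we have $s = \Theta(1/(t-1))$, hence $s^d = \Theta(1/(t-1)^d)$, which is what will eventually produce the claimed dependence on $t$.

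Next I would compute a WSPD $W = \{\{A_i,B_i\} : 1\le i\le m\}$ for $D$ with respect to $s$ by the algorithm behind Theorem~\ref{theo:wspd-on-D}; its size is $m = {\cal O}(s^d n) = {\cal O}(n/(t-1)^d)$. For each pair I would pick, once and for all, an arbitrary representative ball $D_{a_i}\in A_i$ and $D_{b_i}\in B_i$ and insert the unordered pair $\{D_{a_i},D_{b_i}\}$ into an edge set $E$. This defines the imprecise graph $G=(D,E)$ with $|E| = m = {\cal O}(n/(t-1)^d)$ edges, giving the edge bound immediately.

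Correctness is where Lemma~\ref{lem:wspd-to-spanner} does the work, and I do not expect it to be the obstacle. Let $S=\{p_1,\ldots,p_n\}$ be any precise instance. By Lemma~\ref{lem:wspd-to-spanner}, $\{\{A_{i_S},B_{i_S}\}\}$ is a WSPD for the point set $S$ with respect to the same $s$. The induced edge set $E_S$ contains, for every index $i$, exactly the edge $\{p_{a_i},p_{b_i}\}$ joining one point of $A_{i_S}$ to one point of $B_{i_S}$; that is, $E_S$ is precisely a set of representative edges for this point-WSPD. By the standard WSPD spanner theorem for point sets (Narasimhan and Smid~\cite{ns-gsn-07b}, Chapter~9), the choice $s=4(t+1)/(t-1)$ guarantees that $G_S=(S,E_S)$ is a $t$-spanner. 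Since the representatives were fixed at the ball level, independently of $S$, this holds for \emph{every} precise instance $S$, and so $G$ is an imprecise $t$-spanner. The whole argument reduces cleanly to the point case through Lemma~\ref{lem:wspd-to-spanner}.

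The delicate point will be the running time. Invoking Theorem~\ref{theo:wspd-on-D} verbatim only yields ${\cal O}(s^d n\log n)={\cal O}((n\log n)/(t-1)^d)$, which is weaker than the claimed ${\cal O}(n\log n + n/(t-1)^d)$. To obtain the sharper bound I would appeal to the output-sensitive structure of Algorithm~\ref{alg:WSPD}: the split tree on the centers is built once in ${\cal O}(n\log n)$ time, and thereafter both the point-WSPD $W'$ and the partitioning step {\sc FindPairs} emit pairs in time proportional to their number, which is ${\cal O}(s^d n)$ by the packing bound already established in Lemma~\ref{lem:constant-pair}. Writing out the $m$ representative edges costs a further ${\cal O}(m)={\cal O}(s^d n)$. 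Summing gives ${\cal O}(n\log n + s^d n) = {\cal O}(n\log n + n/(t-1)^d)$, as required. The one thing to argue carefully is that the pair-generation phase is genuinely linear in its output size, so that the $\log n$ factor attaches only to the one-time tree construction and not to the pair count; this follows from the Callahan--Kosaraju recursion together with Lemma~\ref{lem:constant-pair}.
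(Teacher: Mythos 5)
Your proposal is correct and follows essentially the same route as the paper's proof: the same choice $s=4(t+1)/(t-1)$, the same WSPD from Theorem~\ref{theo:wspd-on-D}, one representative edge per pair, and correctness for every precise instance via Lemma~\ref{lem:wspd-to-spanner} together with the standard Callahan--Kosaraju WSPD-to-spanner theorem. If anything, your treatment of the running time is more careful than the paper's own, which simply asserts ${\cal O}(n\log n)$ construction time by citing Theorem~\ref{theo:wspd-on-D} even though that theorem as stated only gives ${\cal O}(s^d\, n\log n)$; your output-sensitive accounting (split tree built once in ${\cal O}(n\log n)$, then pairs and edges emitted in time linear in their number ${\cal O}(s^d n)$) is exactly the argument needed to justify the claimed ${\cal O}(n\log n + n/(t-1)^d)$ bound.
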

 \begin{proof}
 Let $s=4(t+1)/(t-1)$ and, by Theorem~\ref{theo:wspd-on-D}, let $\{\{A_i,B_i\}|1\leq i\leq m\}$ be a WSPD for $D$ with respect to $s$ of size $m={\cal O}({s^d\times\Gamma(d/2+1)}/{\pi^{d/2}}\times n )$. Initialize $E=\emptyset$. For $1\leq i\leq m$, we add edge $\{D_j,D_k\}$ to $E$, where $D_j\in A_i$ and $D_k\in B_i$. Let $G=(D,E)$ be the resulting graph. By Theorem~\ref{theo:wspd-on-D}, $G$ can be computed in ${\cal O}(n\log n)$ time. Let $S=\{p_1,\ldots,p_n\}$ be an arbitrary precise instance of $D$. By Lemma~\ref{lem:wspd-to-spanner}, $\{\{A_{i_S},B_{i_S}\}|1\leq i\leq m\}$ is a WSPD for $S$ with respect to $s$. It follows from \cite{callahan1995decomposition} that $G_S=(S,E_S)$ is a $t$-spanner for $S$, that is, $G=(D,E)$ is an imprecise $t$-spanner for $D$.
 \end{proof}
 \section{Conclusions}
 Given a real number $t>1$, in this paper, we present a set of pairwise disjoint line segments in the plane that any imprecise $t$-spanner for the segments is the complete graph. This shows that studying imprecise spanners for some regions is not interesting. Then, we compute a  WSPD with respect to a given real number $s>0$ of size ${\cal O}(n)$ for a set of $n$ pairwise disjoint \mbox{$d$-dimensional} balls with arbitrary sizes in ${\cal O}(n\log n)$ time, when $s$ and $d$ are constants. This WSPD helps us to compute imprecise spanners with ${\cal O}(n)$  edges for a set of $n$ pairwise disjoint $d$-balls that have arbitrary sizes.
\section*{Acknowledgments}
The authors would like to thank the reviewer for his/her  helpful and constructive  comments that improved
the paper.

 \bibliographystyle{elsarticle-num}
 \bibliography{Bib-file}
 \end{document}